\documentclass{llncs}
\usepackage{array}
\usepackage{color}
\usepackage{float}
\usepackage{bm}
\usepackage{vdm-rg}
\usepackage{graphicx}
\leftRecord
\leftCases
\usepackage[bitstream-charter]{mathdesign}
\usepackage[T1]{fontenc}

\DeclareSymbolFont{largesymbols}{OMX}{yhex}{m}{n}
\DeclareMathAccent{\wideparen}{\mathord}{largesymbols}{"F3}

\def\InfRule#1#2#3{\[\raisebox{-0.5\baselineskip}{\mbox{\hbox{\raisebox{0.5\baselineskip}{\framebox{$#1$}}}$\begin{array}[b]{l}#2\\ \hline #3\end{array}$}}\]}

\def\Sigmastart{\Sigma\sb{2}}

\begin{document}

\title{The Complexity of Interference: \\When Rely/Guarantee Does Not Work}

\author{Nisansala P. Yatapanage}
\institute{School of Computing, The Australian National University}

\maketitle

%comment out to restore standard runningheads
\newcommand{\version}{??}
\newcommand{\kopf}{\textnormal{The Complexity of Interference}}
\pagestyle{myheadings}
\markboth{\kopf}{\kopf}
%
%\fbox{Draft: \version}\hfill
%\fbox{Dated: \today}

%%%%%%%%%%%%%%%%%%%%%%%%%%%%%%%%%%%%%%%%%%%%%%%%
\begin{abstract}
Rely/Guarantee is a well-known verification technique for reasoning about concurrent programs. However, for some algorithms, devising suitable rely and guarantee conditions is challenging, due to the strong interference exhibited in these algorithms. The Ben-Ari concurrent garbage collector is an algorithm where the complex interactions between the components prevent the construction of compositional rely and guarantee conditions. This paper investigates an approach for verifying the Ben-Ari algorithm, which enables reasoning to be performed in a more compositional manner in cases where compositional reasoning would not otherwise be possible. This is accomplished by reasoning that a given property holds for all instances of a particular variable and then instantiating the variable to the local variable required. As well as providing a reasoning approach which is more compositional, the result is the identification of the core property required of a component, thus enabling a deeper understanding about the reasons why the algorithm works correctly. This helps to reveal the reasons why the rely/guarantee approach does not work directly for some problems.
\end{abstract}

% !TEX root = gc-new.tex
%
% \newpage

%%%%%%%%%%%%%%%%%%%%%%%%%%%%%%%%%%%%%%%%%%%%%%%%
%J%
\section{Introduction}

Interference is a key aspect of concurrent algorithms, yet the study of interference has received very little attention. Concurrency creates complex interactions which make verification challenging. The core challenge is in specifying the interference that concurrent processes exhibit on each other.

Rely/guarantee \cite{Jones90a} is a well-known reasoning technique revolving around interference. The interference a process can accept from its environment is specified abstractly, as the $rely$ predicate, without needing to refer to the detailed behaviour of the environment processes. The $guarantee$ of a process is a predicate describing the behaviour that the process exhibits. To verify that two processes can operate together concurrently, it must be shown that the guarantee of one process implies the rely of the other, and vice versa. Rely/guarantee is a useful technique which works well for many concurrent algorithms. However, there is a class of algorithms for which rely/guarantee properties cannot easily be devised. Understanding the reasons why it is difficult to devise suitable rely/guarantee properties is essential, in order to develop suitable verification techniques for these algorithms. 

The algorithms which have been successfully verified using rely/guarantee fall into certain classes. The first group are the classic rely/guarantee examples, often used by Jones, the creator of rely/guarantee \cite{Jones90a}. These problems include \textit{findp}, which uses parallel threads to speed up the search for the first index of an array satisfying a given property $p$, and the parallel Sieve of Eratosthenes, which uses parallel threads to identify the set of prime numbers below a given maximum value. See \cite{HayesJones18} for rely/guarantee developments of these problems. These problems all fall into the same class, which contain algorithms that are inherently parallel rather than concurrent. While there is some interference when writing the results of the parallel searches, this is only minor interference which is easily handled using constructs such as Compare-and-Swap. 

Rely/guarantee is often combined with other logics, in particular separation logics, as with RGSep \cite{marriage-RGSep}. RGSep uses rely/guarantee for the shared parts of the memory and separation logic for the local parts. These logics are ideal for problems which have an aspect of separation present.

Non-blocking concurrent algorithms are the most challenging type to verify. Rely/ guarantee has been used for verifying a non-blocking concurrent garbage collector \cite{JonesYatapanage19}, Simpson's four-slot algorithm for concurrent readers and writers \cite{PVEaCfC}, and concurrent stack and queue algorithms in \cite{Yatapanage25}. For all of these approaches, it was concluded that rely and guarantee conditions could not be devised in a straightforward manner. In \cite{PVEaCfC}, in order to verify Simpson's four-slot algorithm, a new notion, \textit{possible values} was devised, to essentially describe the history of values held by a variable throughout the program execution. In \cite{JonesYatapanage19}, the notion of possible values was again used, in order to handle the unpredictable nature of the problem.

In \cite{Yatapanage25}, it is demonstrated that some concurrent stacks and queues pose problems for constructing rely conditions, because multiple environment processes could interfere with each other, thus making it difficult to identify conditions which hold over any arbitrary set of environment steps.

In \cite{JonesYatapanage19}, Ben-Ari's concurrent garbage collector is specified with rely/guarantee, but it is concluded that there is no way of specifying suitable rely and guarantee conditions without breaking compositionality. The Ben-Ari algorithm allows both the $Mutator$ and the $Collector$ to run at the same time without corrupting the memory. The difficulty with devising suitable rely and guarantee conditions is due to a specific interaction between the components, where the interference could cause memory in use to be incorrectly reclaimed by the $Collector$. While the algorithm prevents this, the reason why it works is not obviously clear.

This paper investigates the interference in this algorithm further. In particular, the interference is handled using a novel proof technique, whereby the processes are verified without requiring them to reason about each other's local variables. Instead, the rely of the $Collector$, which contains a local variable, is shown to be satisfied by the $Mutator$ by instead proving that the $Mutator$ maintains a certain property over all the variables of a set. When instantiated with the specific variable required by the $Collector$, the rely is shown to hold. The property essentially acts as a bridge between the guarantee of the $Mutator$ and the rely of the $Collector$. This thereby allows both processes to ensure that their interactions are safe, without having to specifically reference local variables of one process in the other's guarantee. This idea is very briefly mentioned in \cite{JonesYatapanage19} as one possible verification method, but not explored. 

There are two interesting aspects which arise from this. The first is the proof technique itself, which is applicable to other problems. The technique allows strongly intertwined behaviour to be separated, without requiring ghost variables. The second interesting aspect is the abstract predicate which is used for the algorithm. The predicate provides more than a simple bridge between the two components; it uncovers the core property that is required by the $Collector$. This reveals a deeper insight into \textit{why} the algorithm works. Most verification approaches focus on showing that an algorithm is correct, but does not provide any understanding of why it works. Understanding this can aid in devising further concurrent algorithms.

Finally, the approach also provides insights into the nature of interference and why rely/guarantee is insufficient for some algorithms.

\section{Rely/Guarantee Reasoning}

Rely/guarantee reasoning \cite{Jones90a} is a compositional reasoning approach based on Hoare logic. Instead of the usual Hoare triples, there are quintuples, with an additional \textit{rely} and \textit{guarantee} condition. For each process, its rely condition specifies what it is able to handle from the environment, while the guarantee details what the process ensures to do. The rely/guarantee laws (see \cite{HayesJones18} for an introduction using an algebraic approach) require that the guarantee of each process must satisfy the rely of the other, ensuring that the processes can run concurrently. Figure \ref{rg} provides an overview. Note that the rely and guarantee conditions must hold transitively over all of the environment, respectively program, steps.

   \begin{figure}
   \begin{center}
    \includegraphics[width=1\linewidth]{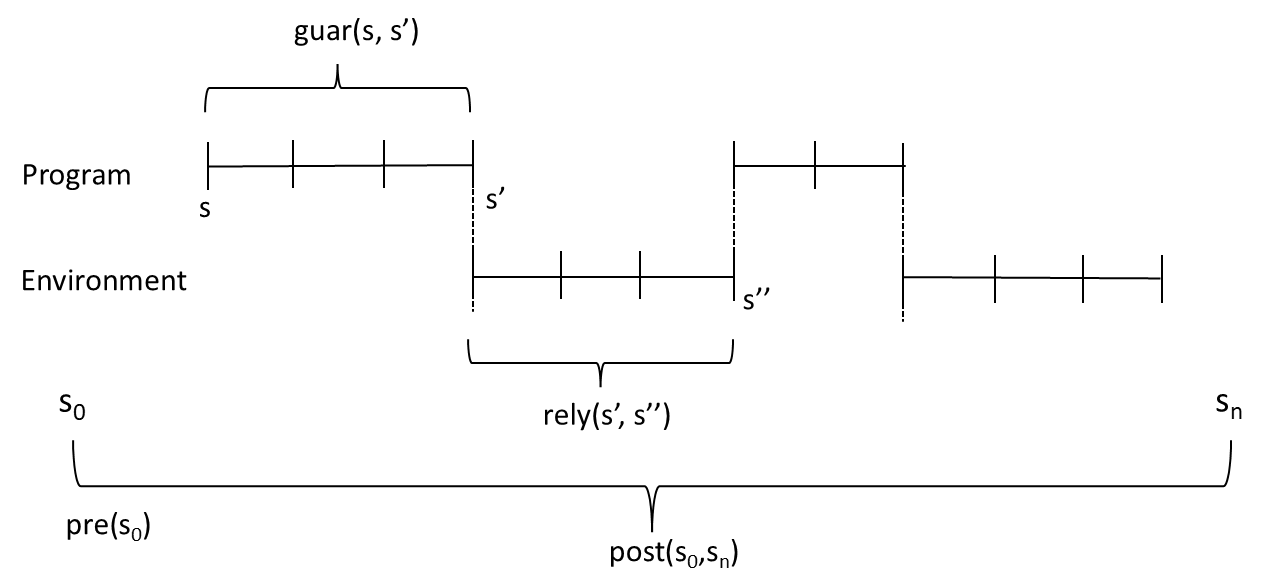}
  \caption{Rely/Guarantee Reasoning}  \label{rg}
   \end{center}
\end{figure}

\section{Ben-Ari Concurrent Garbage Collector}
\label{gcsection}
The Ben-Ari garbage collection algorithm  \cite{BenAri-84} is a mark-and-sweep algorithm where the $Collector$ runs concurrently with the $Mutator$, thus allowing lost memory to be reclaimed while a program is running. Fig. \ref{gc} shows the cycle of behaviour by the two components of the system: the $Mutator$ 
 which uses and loses memory and the $Collector$ which must free the garbage.

   \begin{figure}
   \begin{center}
    \includegraphics[width=0.6\linewidth]{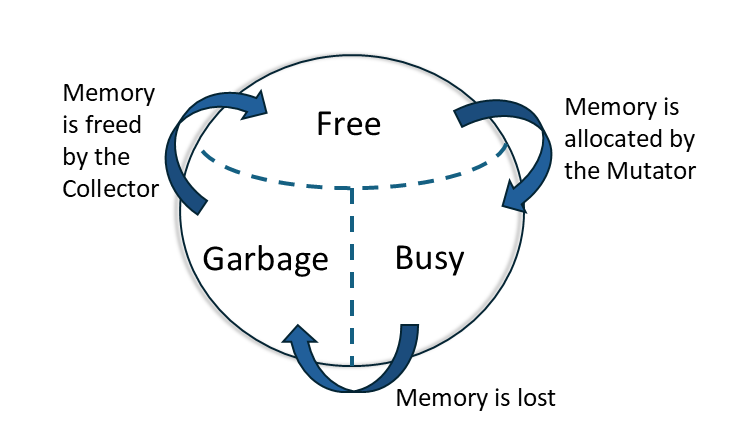}
  \caption{Overview of the Garbage Collector}  \label{gc}
   \end{center}
\end{figure}
 
The $Mutator$ algorithm consists of three actions: redirecting, allocating and removing. A redirect changes a link to point to a different node. Nodes can be allocated by taking a new node from the $Free$ set, and links can be removed. The $Mutator$ may choose to perform any of the actions at any time. Note that the removing of a link can be considered to be the same as a redirect, where a link is changed to point to nil.

The algorithm for the $Collector$ operates in three phases: $Mark$, $Sweep$ and $Unmark$. Nodes may be either marked or unmarked. The $Collector$ uses the marking to determine which nodes are garbage and need to be reclaimed. It starts by iterating through all of the addresses in memory and for each marked address, all the addresses it points to (referred to as the child nodes), are marked as well. After 
every cycle, the number of marked nodes is checked. If it has increased, the cycle runs again, searching
for marked nodes with unmarked children and marking them. An outline of the algorithm for this phase
of the $Collector$'s operation, called \textit{Mark}, is shown in Figure \ref{CollectorAlgorithm}, adapted from \cite{JonesYatapanage19}. 

During the $Sweep$ phase, executed after $Mark$, the $Collector$ reclaims any unmarked nodes and stores them into the $Free$ set. The $Unmark$ phase is responsible for unmarking all of the nodes, except the roots and free set, ready for the next round of $Mark$.

\begin{figure}

\begin{center}
\begin{minipage}[t]{6cm}
\begin{formula}
Mark \;\DEF\T2
   \kw{do}\T3 
       currentCount = \textit{get the number of marked nodes}  \T3
       Propagate\T2
\kw{while } (currentCount == originalCount)
\end{formula}
\end{minipage}
\begin{minipage}[t]{9cm}
\begin{formula}
Propagate \;\DEF\T2
       consid = \emptyset;\T2
       \kw{do while } consid \subseteq Addr\hspace{3em} \T3
         \kw{let } x \in (Addr \minus consid) \kw{ in}\T3
         \kw{if } x \in marked 
              \kw{ then } \textit{mark x's children} \T3
              \kw{else skip} \T3
              \kw{fi};\T3
         consid = consid \union \set{x}\T2
     \kw{od} 
\end{formula}
\end{minipage}
\end{center}

\caption{Code for $Mark$, adapted from \cite{JonesYatapanage19}}\label{CollectorAlgorithm}
\end{figure} 

\subsection{Interference in the Algorithm}
The algorithm as stated above has a significant issue when run concurrently with an active $Mutator$. The issue 
is that addresses that are currently in use could be deemed garbage by the $Collector$ and incorrectly freed. To understand why this is the case, consider the following example, shown in Fig. \ref{example}. 
Suppose that the $Collector$ is in the middle of its Mark phase. It has examined address $A$, finding that 
there are no children. The $Mutator$ then interrupts, and changes the pointer from $A$ to point to $C$. The $Mutator$ then removes the link from $B$ to $C$. The $Collector$ then resumes, but it does not check $A$ again. The $Collector$ moves on to $B$, which no longer has any children. Node $D$ is then checked, but is not
marked. The cycle completes, and the number of marked nodes has not increased, so the cycle does not run again. The $Collector$ begins the Sweep phase, where it removes all unmarked nodes. This now includes node
$C$, which is therefore removed, even though it is in use. 

   \begin{figure}
   \begin{center}
    \includegraphics[width=0.9\linewidth]{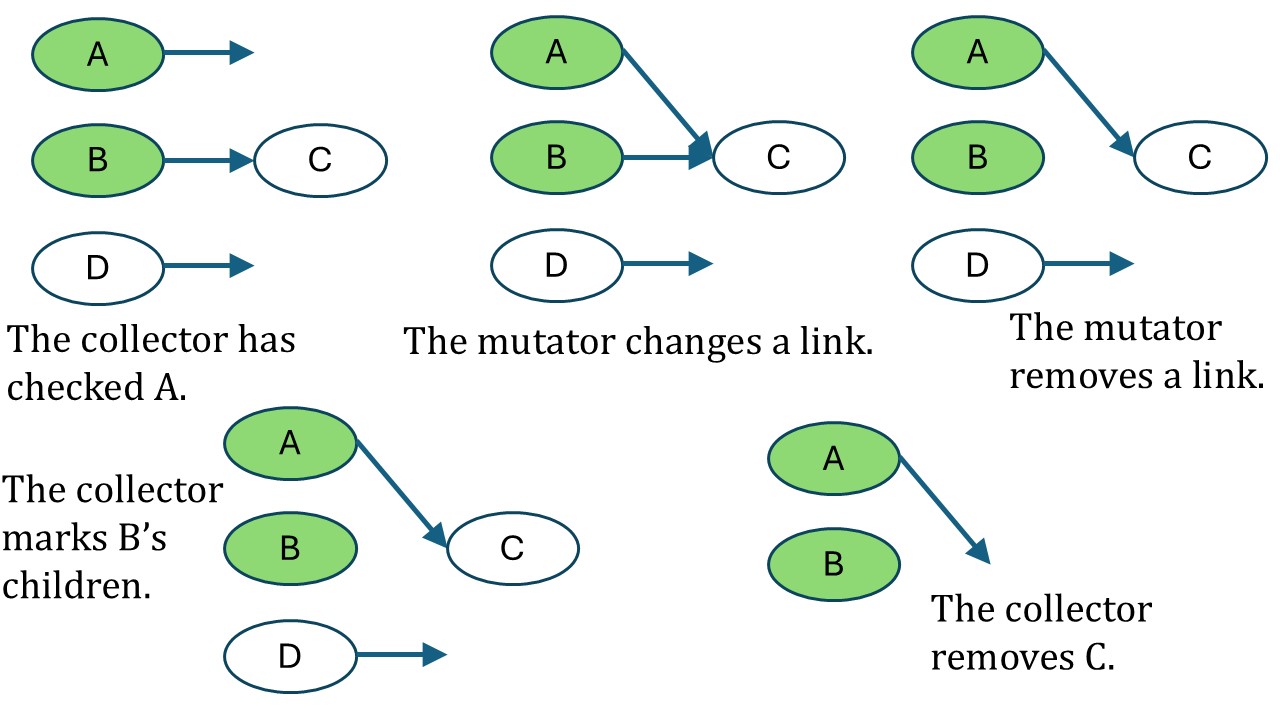}
  \caption{Example of the $Collector$ Incorrectly Removing Addresses (Green nodes are marked)}  \label{example}
   \end{center}
\end{figure}

The Ben-Ari algorithm avoids this situation by requiring the $Mutator$ to perform some actions in order to \textit{help} the $Collector$: whenever the $Mutator$ redirects a pointer in the heap, it must mark the new address. This ensures that the $Collector$ will not inadvertently deallocate memory that is still in use. In the previous example, after the $Mutator$ redirects the pointer from $A$ to point to $C$, it must mark $C$. This ensures that even though the $Collector$ does not mark $C$, at the end of the cycle, the $Collector$ will still recognise that the number of marked nodes has increased, and run the cycle again. On the next cycle, the link from $A$ to $C$ will ensure that $C$ is marked. The essential requirement is that the $Mutator$ must mark the destination node before performing another update, such as a deletion as in the example.

There are many ways of specifying the interference in this scenario. There needs to be a way of specifying the required order of the three actions of the $Mutator$: redirect; mark; delete. The problem only arises when a redirect is followed immediately by a delete. The difficulty with devising a rely condition for the $Collector$ is that it would have to specify this order, whereas rely conditions are binary conditions holding between the start and end states of a block of environment steps. There is no way to refer to steps in the middle. Furthermore, the $Mutator$'s actions might not all fall into a single block of steps. The $Mutator$ might perform a redirect and then be interrupted by the $Collector$. It must be ensured that when the $Mutator$ resumes, it will mark before any further actions. 

One approach for modelling this is to use a ghost variable, as in \cite{JonesYatapanage19}. A ghost variable $tbm$ (or \textit{to be marked}) is used. This is a shared ghost variable, which the $Mutator$ updates whenever there is a pending mark operation. The rely condition of the $Collector$ needs to refer to this ghost variable, requiring the $Mutator$ to mark any pending nodes before any further actions. The disadvantage of this approach is that the $Collector$ must have detailed knowledge about the $Mutator$'s behaviour, i.e. which node has just been modified and is due to be marked.

Another approach would be to use temporal logic to specify the required order of the steps. When considering only the $Mutator$'s steps on the trace, the temporal logic condition required would state that a redirect step should be followed immediately by a marking step. This would require defining the temporal logic behaviour on projected traces consisting of only the $Mutator$'s steps. However, such an approach requires reasoning over the entire traces, instead of utilising the benefit of rely/guarantee where reasoning is performed over single steps or blocks of steps.

Instead of trying to model the interference directly, it is useful to think of the problem in a different way. The $Collector$ requires the $Mutator$ to mark between changes. However, \textit{why} does it want this? When examining the specific scenario discussed above, it is clear that marking is required to prevent the $Collector$ from missing nodes that are in use. What is not so clear is whether this is the only scenario where nodes can be missed. When verifying by focussing on this specific scenario, it is impossible to be certain that it is the only situation that causes problems. Instead, the method shown in this paper reveals a deeper insight into why marking is needed.

\subsection{Required Definitions From \cite{JonesYatapanage19}}

Before proceeding, some preliminary definitions are required, taken directly from \cite{JonesYatapanage19}, presented using VDM syntax. For further details on VDM, please refer to \cite{Jones90a}.

The state of the concrete representation consists of a set of addresses that form the roots, the heap and a set of free addresses. It is defined as follows: 

\begin{record}{\Sigma\sb{1}}
       roots: \setof{Addr} \\
       hp:  Heap \\ 
       free: \setof{Addr}
\end{record}       
\where
\begin{fn}{inv-\Sigma\sb{1}}{mk-\Sigma\sb{1}(roots, hp, free)} 
%    \dom hp = Addr \And \\
    free \inter reach(roots, hp) = \emptyset %\And \hfill \hbox{upper bound for GC}\\
%   \forall{a \in free}{kids(a, hp) =\emptyset}
 \end{fn}
 \noindent
where the Heap is defined as a mapping from a pair, consisting of an address and an index, to an address: 
\type{Heap}{\mapof{(Addr \x Index)}{Addr}}
\noindent
and the function $reach$ returns all the nodes which are reachable from a given set $s$, by following child links transitively. The invariant ensures that there is no overlap between the free addresses and the ones which are currently in use. 
%
%\begin{fn}{reach}{s, hp}
%  \signature{\setof{Addr} \x  Heap \to \setof{Addr}}
%  rel-image (child-rel(hp)\sp{\star}, s)
%\end{fn}
%
%where $rel-image$ computes the relational image of the set (its definition can be found in  \cite{JonesYatapanage19}) and $child-rel$ extracts the relation over addresses from the heap (i.e. ignoring indexed positions), as follows:
%
%  \begin{fn}{child-rel}{hp}
%  \signature{Heap \to \setof{(Addr \x Addr)}}
%    \set{(a, b) | \exists{a, b \in Addr}{b \in children(a,hp)}}
%  \end{fn}
%  
%  \begin{fn}{children}{a, hp}
%  \signature{Addr \x Heap \to \setof{Addr}}
%   \set{hp(a, i) | \exists{i \in Index}{(a, i)} \in \dom{hp}}
%\end{fn}

%\noindent
%The following is the rely condition of the $Collector$:
%
%\begin{fn}{rely-Collector\sb{c}}{mk-\Sigma\sb{3}(roots, hp, free,  marked, tbm), mk-\Sigma\sb{3}(roots', hp', free',  marked', tbm')}\\
% \signature{\Sigma\sb{3} \x \Sigma\sb{3} \to \Bool}
%  free' \subseteq free \And  \\
%         (reach(roots, hp') \minus reach(roots, hp)) \subseteq free \And \\
%         marked \subseteq marked' \And\\
%           (\forall*{(a, i) \in \dom{hp}}{hp'(a, i) \neq hp(a, i) \And hp'(a, i) \in Addr \Implies \T8
%                 hp'(a, i)  \in marked' \Or tbm' = \set{hp'(a,i)})  \And}\\
%           (tbm \neq \emptyset \And tbm' \neq tbm \Implies tbm \subseteq marked' \And tbm' = \emptyset)% \And\\
%%           (tbm \neq \nil \Implies hp' = hp)        % not used!
%\end{fn} 

\section{Rely of the Collector}
\label{AbstractVariable}

To determine why the $Collector$ requires marking, it is helpful to reason about its behaviour. The $Collector$ attempts to
ensure that all marked nodes in the set of considered nodes, $consid$, have marked children. Without any interference by the $Mutator$, the $Collector$ would 
maintain this property; it is only the $Mutator$'s modification of links while the $Collector$ is still in progress that could undermine this. The
$Mutator$'s marking of all target nodes after a change solves the problem.

These observations suggest that a property $can-be-marked$ would be useful, which states that the child of a marked node that has already been considered is either marked as
well or can still be marked in the future by $Propagate$. In order to make the property as general as possible, instead of using $consid$, $can-be-marked$ takes as a parameter an abstract set $s$:

\begin{fn}{can-be-marked}{s, mk-\Sigmastart(roots, hp, free, marked)} \\
  \signature{\setof{Addr} \x \Sigmastart \to \Bool}
   \forall*{a \in (marked \inter s)}{
          \forall*{i \in \dom hp(a)} {
                hp(a, i) \in (Addr \minus marked) \, \And \, hp(a, i) \in reach(roots,hp) \Implies\\
                \exists*{(c, j) \in \dom{hp}
                } {
                        (c \in reach(roots) \Or c \in reach(marked \inter (Addr \minus s), hp)) \And\\  
                        (c \notin s \Or c \notin marked) \And\\
                        a \neq c \And hp(c,j) = hp(a,i)
                        }}}
%          (a \in s \And a \in marked \Implies \T2
%       hp(a,j) \in marked \Or hp(a,j) = \nil)) \Or} \\
%   (\exists{(c,i)\in \dom hp}{
\end{fn}

The $can-be-marked$ property states that for every node that is marked and in the set $s$, if one of its children are unmarked, then there must be another path to that node via another node $c$. This node $c$ should be reachable from the roots, or from a marked node that is not in $s$. This ensures that $c$ will eventually be marked on some iteration of $Propagate$.\footnote{It may seem sufficient to state that $c$ is reachable from the roots, but it is possible that the redirection could have broken the path from the roots to $c$. In such a case, $c$ will still be found in the search via the broken path, if that path was already partially explored.} The $c \notin s \Or c \notin marked$ requirement ensures that $c$ has either not been considered yet by the $Collector$ or has not been marked yet, although it will be marked in the future as it is reachable from a marked node.

\begin{fn}{all-ok}{s, mk-\Sigmastart(roots,hp,free,marked)} \\
  \signature{\setof{Addr} \x \Sigmastart \to \Bool}
   \forall*{(a,j) \in \dom hp} {a \in s \And a \in marked \And hp(a,j) \in reach(roots,hp) \implies 
   hp(a,j) \in marked}
\end{fn}

%\begin{fn}{is-prob}{a, i, s, mk-\Sigmastart(roots,hp,free,marked)} \\
%  \signature{Addr \x \Nat \x \setof{Addr} \x \Sigmastart \to \Bool}
%a \in s \And a \in marked \And hp(a,j) \in reach(roots,hp) \And\\
% (hp(a,j) \notin s \Or hp(a,j) \notin marked)
%\end{fn}

This property is stronger than $can-be-marked$, stating that there are no marked nodes
with unmarked children. At whatever time the $Collector$ is interrupted, it would have a local $consid$ set containing the addresses that have been explored so far.
At the point that they were added to the $consid$ set, these addresses would all have satisfied $all-ok$, as all marked nodes in the set would have had marked children. 

However, there is one situation that changes this. 
If the $Collector$ marks a child of a marked node, it is possible that the child was already in $consid$. In that case, it violates $all-ok$, as there would then be a marked node in $consid$ without necessarily having marked children. Recall that when a node is explored, if it is not marked, it is just added to $consid$ without marking its children. This is not a problem for the $Collector$: as the number of marked nodes increases, the $Propagate$ loop will run again, finding the problem node on the next iteration.

Therefore, instead of the $consid$ set, there is a subset, referred to as $safe$, which are the nodes that the $Collector$ will not necessarily revisit. These are the nodes of $consid$ that still satisfy $all-ok$ at the point when the $Mutator$ interrupts.  

Since the $Collector$ will assume that the $safe$ addresses have finished being examined, the $Mutator$ must ensure that there is some way that will cause the $Collector$ to revisit them. There are two options: either there is another path to the node that is yet to be explored, as defined by $can-be-marked$, or the number of marked nodes have increased, which will cause the $Collector$ to run another cycle of $Propagate$.

This reasoning gives the following rely condition:

%This section presents the full approach of using \textit{Holds For All} reasoning for this problem. For this version, there is no sharing of auxiliary variables between the $Mutator$ and $Collector$. As a result, the $Mutator$'s guarantee and the $Collector$'s rely are different to each other. While the $Mutator$ uses a variable $tbm$ (\textit{to-be-marked}), to indicate which node is still to be marked after a $Mutator$ operation, as in \cite{JonesYatapanage19}, unlike that version, it is a local auxiliary variable and not a global shared variable, i.e. the $Mutator$ is the only component that has access to it. Similarly, the $Collector$ makes use of the $consid$ set, which is its local auxiliary variable and cannot be accessed by the $Mutator$. The $consid$ set represents the set of variables which the Collector has already considered in the current round of marking. Marked nodes are added to $consid$ only after all of their children have been marked. 
   
%   The rely of the $Collector$ is that if the initial state satisfied $can-be-marked$ using the $consid$ set, then the final state also satisfies $can-be-marked$. Additionally, the marked nodes should only increase, to prevent the $Mutator$ from unmarking nodes. If a change has been made, the new destination node should be reachable, to prevent connections being made to garbage nodes that were lost.

\begin{fn}{rely-Collector}{\sigma, \sigma'}\\
 \signature{\Sigmastart \x \Sigmastart \to \Bool}
marked \subseteq marked' \And \\
  (\forall*{(c, i) \in \dom hp}{
   hp'(c, i) \neq hp(c, i) \And
    hp'(c, i) \in (Addr \minus free) \Implies
      hp'(c, i) \in reach(roots, hp) )\And}  \\
     (all-ok(safe,\sigma) \Implies can-be-marked(safe,\sigma') \Or (\sigma'.marked > \sigma.marked))
\end{fn}

%\begin{fn}{mark-change}{s, mk-\Sigmastart(roots,hp,f,marked), mk-\Sigmastart(r',hp',f',marked')} \\
%  \signature{ \setof{Addr} \x \Sigmastart \x \Sigmastart \to \Bool}
% \forall*{(a,j)\in \dom hp}{(a \in s \And a \in marked \And hp(a,j) \notin marked 
%\Implies\T2
% (hp(a,j) \in marked' \Or \T2
%    (\exists{(c,i)\in \dom hp}{hp'(c,i) = hp(a,j) \And c \in reach(roots,hp') \T3
%     \And a \neq c \And (c \notin s \Or c \notin marked))))}}
%\end{fn}
  
\section{\textit{Holds For All} Reasoning} 
   
An interesting observation is that the $Mutator$ intuitively satisfies the conditions required by the $Collector$ in $rely-Collector$
without needing to understand the notion of $safe$, nor even the contents of $safe$; these conditions are simply satisfied by the $Mutator$'s actions, in particular
its marking of a target node before proceeding to make another change. This suggests that the $Mutator$'s actions satisfy these conditions
not just on the $safe$ set, but on any similar set that possesses certain qualities. This abstract set forms the basis of this paper's approach for reasoning about this problem. The outline of the approach in general is as follows: \\

%\begin{definition}\textit{Holds-for-all}
%
%
%$\forall {v, \sigma, \sigma'}$
%\begin{formula}
%\left(
%\left(
%\begin{array}{l}  
%                p(v,\sigma) \, \And \\
%                ( \forall {\mu, \mu', x} {p(x, \mu) \And guar-B(\mu,\mu') \implies p(x,\mu'))} \, \And \\
%                guar-B(\sigma,\sigma') \, \And\\
%                ( \forall{\mu, \mu', x}{p(x, \mu) \And p(x, \mu') \implies rely-A(\mu, \mu')})
%\end{array}
%\right)                  \Implies
%                 rely-A(\sigma,\sigma')  
%                 \right)      
%\end{formula}     
            
%where $v$ is a variable and $\sigma$ and $\sigma'$ are states. 
%\end{definition}

\begin{theorem}{\textit{Holds For All}}
\InfRule{holds-for-all}{guar-B(\sigma,\sigma') \\ p(s,\sigma) \\
 \forall{w}{guar-B(\sigma,\sigma') \, \And \, p(w,\sigma)  \implies q(w,\sigma, \sigma')} \\
p(s,\sigma) \, \And \, q(s,\sigma,\sigma') \implies rely-A(\sigma,\sigma')                  
                     }
                    {rely-A(\sigma,\sigma')}
                    
\end{theorem}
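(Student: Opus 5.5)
The rule is essentially a piece of first-order logic: universal instantiation followed by two uses of modus ponens. The plan is to fix arbitrary states $\sigma,\sigma'$ and a particular set $s$, assume the four premises, and derive $rely\mbox{-}A(\sigma,\sigma')$ directly. No property of the garbage collector (nor of $guar\mbox{-}B$, $p$, $q$ or $rely\mbox{-}A$) is needed. The rule is schematic in these predicates, so its soundness must not depend on their content.

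First, I will instantiate the third premise, $\forall{w}{guar\mbox{-}B(\sigma,\sigma') \And p(w,\sigma) \implies q(w,\sigma,\sigma')}$, at $w := s$. This gives $guar\mbox{-}B(\sigma,\sigma') \And p(s,\sigma) \implies q(s,\sigma,\sigma')$. The antecedent is exactly the conjunction of the first two premises, $guar\mbox{-}B(\sigma,\sigma')$ and $p(s,\sigma)$. Modus ponens therefore yields $q(s,\sigma,\sigma')$.

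Second, I will conjoin this with the premise $p(s,\sigma)$ to obtain $p(s,\sigma) \And q(s,\sigma,\sigma')$. This is the antecedent of the fourth premise, so a second modus ponens delivers $rely\mbox{-}A(\sigma,\sigma')$, as required.

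There is no substantial obstacle. The only step needing care is the scoping in the third premise: $\sigma$ and $\sigma'$ must be the same free states as in the other premises, and only $w$ is bound. Otherwise the instantiation at $w := s$ is not licensed. I would state this reading explicitly.

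It is also worth recording why the rule is useful despite being logically trivial. The third premise is the obligation discharged by the $Mutator$ alone, since it quantifies over every set $w$ and so never mentions the $Collector$'s local variable. The variable $s$, here $safe$, enters only in the second and fourth premises, which concern the $Collector$. In the intended application, $p$ will be $all\mbox{-}ok$ and $q$ will be the disjunction of $can\mbox{-}be\mbox{-}marked(w,\sigma')$ and the growth of $marked$.
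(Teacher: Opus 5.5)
Your proof is correct and is the same as the paper's: instantiate the third premise at $w := s$, discharge its antecedent with the first two premises to get $q(s,\sigma,\sigma')$, then apply the fourth premise to obtain $rely\mbox{-}A(\sigma,\sigma')$. Your explicit note that $\sigma,\sigma'$ are free in the third premise and only $w$ is bound is a useful clarification that the paper leaves implicit.
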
  
                    
The idea is to find a suitable property $p$ and a relation $q$ such that if: 
\begin{itemize}
\item the guarantee of one of the components, $B$, holds between the states $\sigma$ and $\sigma'$,
\item a property $p$ holds on $\sigma$,
\item for any $w$, if $p$ holds on $\sigma$ and the guarantee of $B$ holds between $\sigma$ and $\sigma'$, then a relation $q$ holds over the two states and $w$, and
\item  the property $p$ holding on $\sigma$ with a specific variable $s$, and the relation $q$ holding between $\sigma$ and $\sigma'$ with $s$, implies the rely of the other component, $A$,
\end{itemize} 
then it can be concluded that $rely-A$ holds between $\sigma$ and $\sigma'$.  \\

\textbf{Proof.} \\
Since $guar-B$ holds and $p$ holds over $\sigma$ and $s$, then from the third clause, it can be concluded that $q(s,\sigma, \sigma')$ holds. From the last clause, this implies that $rely-A$ holds between $\sigma$ and $\sigma'$.  \qed

For this particular application of the rule, the variable $s$ is a set of addresses. This allows reasoning about the $safe$ set of the $Collector$ without actually referring to the $safe$ set. The $Mutator$ is required to show that the third clause above holds over any variable $w$. This proof does not require the $Mutator$ to have knowledge of what addresses are currently in the $safe$ set. Using the above reasoning, this establishes that the rely condition of the $Collector$ holds, which refers to the $safe$ set.

\subsection{The $Mutator$ Proof Obligations}

Applying the \textit{Holds For All} theorem, with $all-ok(s,\sigma')$ as $p$ and $can-be-marked(s,\sigma') \Or (\sigma'.marked > \sigma.marked)$ as $q$, the proof obligation of the $Mutator$ is to show that for any set of addresses, if $p$ holds on $\sigma$, then $q$ holds between $\sigma$ and $\sigma'$.

The $Mutator$ consists of two atomic operations, $Change$ and $MarkTbm$, which are
 described  below. As
these operations are atomic, the $Mutator$'s overall guarantee is formed by the postconditions of each of the operations.

The $post-Change$ condition ensures that if a redirection occurred, the new target node is stored as $tbm$ and that node was reachable from the roots in the original $hp$. Note that it may no longer be reachable in $hp'$, but as long as it was reachable in the previous state, it satisfies the required conditions of the $Collector$.

\begin{fn}{post-Change}{\sigma, \sigma'}\\
 \signature{\Sigmastart \x \Sigmastart \to \Bool}
 \Let (x,k) = \uniqueval{(x,k)} {hp(x,k) \neq hp'(x,k)} \In 
        ((hp'(x,k) = \nil \And tbm' = \nil) \Or \\            
         (tbm'= hp'(x,k) \And
         hp'(x,k) \in reach(roots,hp))) \\
       \And \sigma'.marked = \sigma.marked 
\end{fn}

\noindent
After $Mark-Tbm$ finishes, the node that was stored as $tbm$ must have been marked. 

\begin{fn}{post-MarkTbm}{\sigma, \sigma'}\\
 \signature{\Sigmastart \x \Sigmastart \to \Bool}
((tbm = \nil \, \And \, marked' = marked) \Or\\
marked' = marked \, \union \, \set{tbm}) \, \And \, tbm' = \nil
\end{fn}

Starting from an $all-ok$ state, the $Mutator$ can execute $post-Change$. This results in $can-be-marked$ holding on the final state, given by Lemma \ref{canbemarkedRule}.
% I changed mut-prop4-b to c and vice versa - watch out whether this affects anything later!
\begin{lemma} \label{canbemarkedRule}
$all-ok(s,\sigma) \, \And \,  post-Change(\sigma, \sigma')
                \implies can-be-marked(s,\sigma') \, \And \, tbm-put(s,\sigma')$  
\end{lemma}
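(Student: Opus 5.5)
The plan is a direct unfolding of the definitions. Throughout I use three facts from $post-Change$: exactly one location $(x,k)$ changes, $marked' = marked$, and either $hp'(x,k)=\nil$ with $tbm'=\nil$, or $tbm' = hp'(x,k) = t$ with $t \in reach(roots,hp)$. The key structural fact is
\[ reach(roots,hp') \subseteq reach(roots,hp). \]
To prove it, take any $hp'$-path from the roots. If it avoids the edge $(x,k)$ it is already an $hp$-path. Otherwise, cut it at its last use of $(x,k)$: the target $t$ is in $reach(roots,hp)$, and the remainder of the path is an $hp$-path. Since $marked'=marked$, for a given $a \in marked \inter s$ and $i$, the hypothesis ``$hp'(a,i)$ is unmarked and in $reach(roots,hp')$'' gives that $hp'(a,i)$ is unmarked in $\sigma$ and lies in $reach(roots,hp)$.

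For $can-be-marked(s,\sigma')$, fix such $a,i$ and split into two cases. If $(a,i)\neq(x,k)$, then $hp'(a,i)=hp(a,i)$. By the fact above this node is in $reach(roots,hp)$, so $all-ok(s,\sigma)$ forces it to be marked, a contradiction; this case is vacuous. If $(a,i)=(x,k)$, then $t=hp'(x,k)$ is non-nil and unmarked. Take an $hp$-path $r_0,\dots,r_n=t$ from a root, and let $c=r_{n-1}$ with $hp(c,j)=t$. Then $c\notin s \Or c\notin marked$: otherwise $all-ok$ would make $t$ marked, since $t$ is reachable in $hp$. In particular $c\neq a$, because $a \in marked\inter s$. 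Moreover $(c,j)\neq(x,k)$, since $hp(x,k)\neq t$. Hence $hp'(c,j)=t$, and $c$ is the required witness, provided it has the reachability property.

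The reachability of $c$ is the main obstacle; this is exactly the ``broken path'' of the footnote. If the prefix $r_0..r_{n-1}$ does not use the edge $(x,k)=(a,i)$, it survives in $hp'$ and $c\in reach(roots,hp')$. Otherwise, take the last occurrence of that edge on the path. The suffix from $u=hp(a,i)$ to $c$ avoids $(x,k)$, so it is a valid $hp'$-path. The node $u$ is marked: it is an $hp$-child of $a\in marked\inter s$ and is in $reach(roots,hp)$, so $all-ok$ applies. Walk along the suffix: as long as the current node is in $marked\inter s$, $all-ok$ makes its successor marked, because every node on the path is $hp$-reachable. If every node up to and including $c$ were in $marked\inter s$, this propagation would make $t$ marked, which is impossible. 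So some node on the suffix lies in $marked\inter(Addr\minus s)$, and $c\in reach(marked\inter(Addr\minus s),hp')$. I would state this propagation as a small auxiliary lemma: under $all-ok$, $hp$-paths starting at a marked node stay marked while they stay inside $s$.

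For $tbm-put(s,\sigma')$, which I read as the invariant that the only possible violation of $all-ok$ on $s$ is at the pending target, i.e.\ every unmarked reachable child of a node in $marked\inter s$ equals $tbm'$, the same case split does the work. The case $(a,i)\neq(x,k)$ is vacuous as above, and in the case $(a,i)=(x,k)$ the offending child is $t=tbm'$ by $post-Change$. If $tbm-put$ is defined differently in the paper, I expect it still to follow from this same case analysis on whether the violating location is $(x,k)$.
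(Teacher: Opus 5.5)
Your argument for $can-be-marked(s,\sigma')$ is correct and follows the paper's route. You isolate the single changed location $(x,k)$, take a predecessor $c$ of the new target $t$, and split on whether $c$ is still reachable from the roots in $hp'$ or only through the old target $hp(x,k)$. In the latter case you walk from $hp(x,k)$, which $all-ok$ makes marked, to the first node of $marked \cap (Addr \setminus s)$; this is the paper's ``first marked node $q$ with an unmarked child''. Your version is tighter than the appendix in three ways:
\begin{itemize}
\item Choosing $c$ on an $hp$-path from the roots guarantees $c \in reach(roots,hp)$, which the paper's Case 2.1 assumes without justification.
\item You verify $c \neq a$, which the paper never does.
\item The inclusion $reach(roots,hp') \subseteq reach(roots,hp)$ disposes of the paper's Cases 1, 2.1.2 and 2.2.1.1 at once.
\end{itemize}
Make one hidden hypothesis explicit. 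The path $r_0,\dots,r_n$ has $n \geq 1$, so that $c = r_{n-1}$ exists, only because the unmarked $t$ is not a root. You are therefore using the invariant $roots \subseteq marked$, which the paper also invokes. Without it, an unmarked root whose only incoming link is the new one refutes the lemma.

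The genuine gap is the $tbm-put$ conjunct. The paper's $tbm-put$ has no reachability hypothesis. It requires $hp'(a,j) = \mathit{nil}$ or $hp'(a,j) \in marked$ for every $(a,j)$ with $a \in marked \cap s$ and $hp'(a,j) \neq tbm'$. Your fallback claim that this ``still follows from the same case analysis'' is false. The case $(a,i) \neq (x,k)$ was vacuous only because the offending child was assumed reachable, and $all-ok(s,\sigma)$ says nothing about unreachable children. For a counterexample:
\begin{itemize}
\item Take $s = \{a\}$, where $a$ is marked and its only child $b$ is unmarked and outside $reach(roots,hp)$. Marked garbage of this kind arises routinely during marking.
\item Let the $Change$ redirect any link not leaving $a$ to a reachable target $d$, so necessarily $d \neq b$.
\item Then $all-ok(s,\sigma)$ and $post-Change(\sigma,\sigma')$ hold, but $tbm-put(s,\sigma')$ fails at $a$.
\end{itemize}
So with the definition as written this conjunct is unprovable. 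The paper's one-line justification, that $all-ok$ leaves no other marked nodes of $s$ with unmarked children, misreads $all-ok$ in exactly this way. Your reachability-qualified reading is the version that is true. It is also still enough for Lemma~\ref{markRule}: if $tbm$ was unmarked, $marked$ grows; otherwise it yields $all-ok(s,\sigma')$ directly. State it explicitly as a correction of $tbm-put$ rather than as an interchangeable reading.
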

The above lemma also results in the condition $tbm-put$. The $tbm-put$ property ensures that $tbm$ is the
only unmarked node with a marked parent in the set. The need for this new property is that while
 $can-be-marked$ ensures that all nodes requiring marking are accessible via another unexplored node, it does not
  ensure that such nodes are recorded by the $tbm$ variable. This is needed for specifying the requirements under which $MarkTbm$ executes.
  
\begin{fn}{tbm-put}{s, mk-\Sigmastart(roots,hp,free,marked)} \\
  \signature{\setof{Addr} \x \Sigmastart \to \Bool}
\forall*{(a,j) \in \dom hp}{(\sigma.tbm \neq hp(a,j) \And a \in s \And a \in marked \Implies \T4
        hp(a,j) = \nil \Or hp(a,j) \in marked)}
\end{fn}    

If $can-be-marked$ holds on a set $s$, as well as the property $tbm-put$, then executing $MarkTbm$ will ensure
that the set either satisfies $all-ok$ or the number of marked nodes has increased. This is given by Lemma \ref{markRule}. If the $Mutator$ was responsible for adding an unmarked node, $tbm-put$ holds. Therefore, after the $MarkTbm$ operation, in
which $tbm$ is marked, ideally, there would not be any more problem nodes. However, it is possible for the $Mutator$ to mark a destination node that is already in the set $s$, similarly to how the $Collector$ could mark a child that is already in $consid$.
In such a case, $all-ok$ would not hold, but the number of marked nodes would have increased.\footnote{Note that it is possible for a $MarkTbm$ operation to not increase the number of marked nodes, if the node it attempts to mark is already marked. Nevertheless, since the node is already marked, such cases cannot result in a new problem node, and therefore, $can-be-marked$ still holds.}

  \begin{lemma} \label{markRule}                  
$can-be-marked(s,\sigma) \, \And \, tbm-put(s,\sigma) \, \And \,
                     post-MarkTbm(\sigma, \sigma') \implies \R
                    all-ok(s,\sigma') \Or (\sigma'.marked > \sigma.marked)$
                    \end{lemma}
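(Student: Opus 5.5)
We prove Lemma \ref{markRule} by case analysis on the two disjuncts of $post-MarkTbm$. Throughout we use that $MarkTbm$ only changes $marked$ and $tbm$. In particular $hp' = hp$, $roots' = roots$ and $free' = free$, so $reach(roots,hp') = reach(roots,hp)$ and $\dom hp' = \dom hp$. This frame property is implicit in the operation being atomic and only marking, and we take it as part of $post-MarkTbm$.

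\emph{Case 1: $tbm = \nil$ and $marked' = marked$.} Here $\sigma$ and $\sigma'$ agree on every component relevant to $all-ok$. We show $all-ok(s,\sigma')$ directly from $tbm-put(s,\sigma)$. Take any $(a,j) \in \dom hp$ with $a \in s$, $a \in marked'$ and $hp(a,j) \in reach(roots,hp)$. Since $hp(a,j)$ is an address, $hp(a,j) \neq \nil = \sigma.tbm$. So the hypothesis of $tbm-put$ applies, giving $hp(a,j) = \nil \Or hp(a,j) \in marked$. The first disjunct is excluded, so $hp(a,j) \in marked = marked'$. Note that $can-be-marked$ is not even needed in this case.

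\emph{Case 2: $marked' = marked \union \set{tbm}$.} We split on whether $tbm \in marked$.
\begin{itemize}
\item If $tbm \notin marked$ (and $tbm \neq \nil$), then $marked'$ strictly contains $marked$. Reading $\sigma'.marked > \sigma.marked$ as a comparison of cardinalities, as in $Mark$'s counting, the right disjunct holds immediately.
\item If $tbm \in marked$, then $marked' = marked$. Take $(a,j)$ with $a \in s \inter marked$ and $hp(a,j) \in reach(roots,hp)$. If $hp(a,j) \neq tbm$, then $tbm-put$ gives $hp(a,j) \in marked$, as in Case 1. If $hp(a,j) = tbm$, then $hp(a,j) \in marked$ already.
\end{itemize}
In both subcases of $tbm \in marked$ we obtain $all-ok(s,\sigma')$. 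If $tbm = \nil$ in this disjunct, the argument reduces to Case 1.

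The main obstacle is not the logic but the bookkeeping. First, the frame conditions ($hp$, $roots$ unchanged) must be justified, since $post-MarkTbm$ as written does not state them. Second, ``$>$'' on $marked$ must be read as cardinality, and strict growth needs $tbm \notin marked$ together with finiteness of $Addr$. Third, the role of $can-be-marked$ must be settled: the argument above uses only $tbm-put$, so $can-be-marked$ is a harmless extra hypothesis. It is presumably included so the lemma composes with Lemma \ref{canbemarkedRule}, and to cover the footnoted case where an already-marked $tbm$ yields no growth. If the intended reading of $tbm-put$ were weaker, our fallback would be to use $can-be-marked$ to exhibit an unexplored parent $c$. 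In that case we would aim for the weaker conclusion $can-be-marked(s,\sigma')$ rather than $all-ok$.
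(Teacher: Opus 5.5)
Your proof is correct, and it follows the same case split that the paper sketches in the text and footnote before the lemma (the appendix gives no detailed proof of it): if $MarkTbm$ marks a previously unmarked $tbm$, then $marked$ strictly grows, which covers the case $tbm \in s$ where $all-ok$ can genuinely fail; otherwise $marked' = marked$, and $tbm-put(s,\sigma)$ gives $all-ok(s,\sigma')$ directly. Your remarks are also accurate that the frame condition $hp' = hp$, $roots' = roots$ must be read into $post-MarkTbm$, and that the $can-be-marked$ hypothesis is redundant: the paper's footnote only claims $can-be-marked$ in the already-marked case, whereas $tbm-put$ alone yields the stronger $all-ok$.
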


Lemma \ref{canbemarkedRule} requires $all-ok$ before $Change$ executes. Since the Ben-Ari algorithm requires the $Mutator$ to mark in between every change, there needs to be a way of constructing a suitable set that satisfies $all-ok$ after a $MarkTbm$ operation. 
After marking, as discussed above, there may be at most one new problem node. Thus, the set of all nodes in $s$ except that node satisfies $all-ok$. 

The alternation of $Change$ and $MarkTbm$ potentially results in a smaller set each time that $MarkTbm$ executes. Each time the set reduces, the number of marked nodes has increased, by Lemma \ref{markRule}. Therefore, for any sequence of $Mutator$ steps, either $can-be-marked$ can be shown to hold at the end of the sequence, or the number of marked nodes would have increased. This satisfies the requirements of $rely-Collector$.

\subsection{Showing that $rely-Collector$ holds} \label{S-CollNoShare} 

In order to use the $HoldsForAll$ rule, it now remains to be shown that using $all-ok(s,\sigma')$ as $p$ and $can-be-marked(s,\sigma') \Or (\sigma'.marked > \sigma.marked)$ as $q$ implies
$rely-Collector$. This is trivial for the last clause of $rely-Collector$. The other clauses do not involve any internal variables of the $Collector$, so it is simple to directly prove that 
the $Mutator$ satisfies these clauses.

As explained earlier, the set which satisfies $all-ok$ or $can-be-marked$ may decrease after steps of the $Collector$ or $Mutator$. The $safe$ set represents the set for which the $Collector$ has preserved $all-ok/can-be-marked$. If the $Mutator$ reduces the set, it must increase the number of marked nodes, as shown by the $q$ clause. This guarantees that another round of marking will take place and the required nodes will be identified.

%It now remains to be shown that the iterations of $Propagate$ result in $post-Mark$ eventually being achieved. This is 
%easily shown by proving that $post-Propagate$ ensures that there can be no node in $reach(roots, hp')$ that is not in
%$marked'$. 
%The full
%$Mark$ and $Propagate$ operations are given below.
%
%\begin{op}[Mark]
%\ext{\Wr marked\\
%       \Rd hp, roots, free \\
%        \kw{owns } \Wr consid}
%\pre {roots \union free \subseteq marked}
%\rely { marked \subseteq marked' \And \\
%  (\forall*{(c, i) \in \dom hp}{
%   hp'(c, i) \neq hp(c, i) \And
%    hp'(c, i) \in (Addr \minus free) \Implies
%      hp'(c, i) \in reach(roots, hp) )\And}  \\
%      mark-change(consid,\sigma,\sigma')}
%\guar  {free \subseteq free' \And \\
%marked \subseteq marked' }
%\post {reach(roots, hp') \subseteq marked' \And \\
%marked' \inter garbage(roots, free, hp) = \emptyset
%}
%\end{op}
%
%\begin{op}[Propagate]
%\ext{\Wr marked \\
%        \Rd hp \\
%         \kw{owns } \Wr consid}
%\pre {\true}
%\rely {
%  marked \subseteq marked' \And \\
%  mark-change(consid,\sigma,\sigma')}
%\guar{
%free \subseteq free' \And \\
%marked \subseteq marked' }
%\post {(cm-h(hp', marked') \Or marked \subset marked') \And\\
%           \forall*{a \in marked}{
%                      \Not p-n(a, hp'(a), marked') \Or\\
%                      \exists*{(b, j) \in \dom{hp}}{b \in reach(roots, hp') \And  cm-n(hp'(a), (marked' \union \set{hp(b, j)}))}
%                      }}
%\end{op}

%\plannote{(NPY) Check whether the above $post-Propagate$ is what I used for the proofs, since it was changed many times.}

\section{Related Work and Conclusion} \label{Related}
This paper has proposed an approach for verifying algorithms with two components, known as \textit{Holds For All} reasoning, for reasoning about the local variables of another component, without the need for shared auxiliary variables. The idea is to identify a general property that holds over an abstract, universally quantified variable, which is then instantiated to the particular local variable required. The proofs are currently being mechanised and will be completed for future work.

Apart from the benefit of avoiding breaks in compositionality, the approach is useful because it helps to identify what the required core property of the component is. Even though the rely condition of the $Collector$ focussed on the $safe$ local variable, the actual needs of the $Collector$ were identified through the $can-be-marked$ property.  

Havelund \cite{havelund1999mechanical} verifies the Ben-Ari algorithm using the PVS theorem prover and also with the Mur$\phi$ model checker in order to perform a comparison of the two techniques. A requirement is formalised that no marked node can point to an unmarked node, but, as stated in the paper, the proofs contain 55 lemmas, and therefore, it is not easy to determine the overall properties that are required, nor the details of how they were verified. Both of the proofs given are not compositional, relying instead on an overall proof of the system as a whole.

Another attempt at verifying Ben-Ari's algorithm is \cite{NietoEsparza00}, which uses Owicki-Gries reasoning \cite{OwickiGries76}. The proof uses two auxiliary variables. As well as this, the key property of the $Collector$ given in this paper, $can-be-marked$, is not identified in \cite{NietoEsparza00}. A few other attempts at the verification of Ben-Ari's algorithm exist, but some were shown to contain flaws, such as Ben-Ari's own proof of correctness. Havelund \cite{havelund1999mechanical} gives an interesting account of several authors who encountered similar mistakes in reasoning about Ben-Ari's algorithm, demonstrating how difficult it is to prove its correctness, due to the intricate interactions between the components. For this reason, identifying a clear requirement for the $Collector$ to perform correctly is essential.

Identifying the core property enables more efficient algorithms to potentially be designed. For example, it is clear from the $can-be-marked$ property that the $Mutator$ marks excessive nodes than it needs to in order to ensure that the $Collector$'s marking is kept stable. An alternative algorithm may be devised which allows the $Mutator$ access to the $consid$ set, in order to determine whether or not a particular node needs to be marked. This is similar to the algorithm verified in \cite{Zakowski2017}, in which the two components have access to each other's current state. Such approaches would break compositionality.

The identification of the core property also provides a deeper understanding of how Ben-Ari's algorithm works, and in particular, \textit{why} it works. The same approach could be useful for understanding further classic algorithms, for which it is difficult to understand why they work.

\bibliography{parallel}

@string{acta = {Acta Informatica}}

@incollection{HayesJones18,
	Author = {Hayes, I. J. and Jones, C. B.},
	oAuthor = {Ian J. Hayes and Cliff B. Jones},
	BookTitle =  "Engineering Trustworthy Software Systems",
	oBookTitle = "Engineering Trustworthy Software Systems: Third International School, SETSS 2017 Chongqing, China, April 17--22, 2017",
	Editor = {J. P. Bowen and Z. Liu and Z. Zhang},
	oEditor = {Bowen, Jonathan P. and Liu, Zhiming and Zhang, Zili},
	Publisher = "Springer International Publishing",
	Address = "Cham",
	Series = {LNCS},
	Volume = {11174},
	Title = {A Guide to Rely/Guarantee Thinking},
	Pages = {1--38},
	Issn = {0302-9743},
	Isbn = "978-3-030-02927-2",
	Doi = {10.1007/978-3-030-02928-9_1},
	Year = {2018}}

@book{Jones90a,
  author = {C. B. Jones},
  edition = {Second},
  isbn = {0-13-880733-7},
  length = {333 pages},
  publisher = {Prentice Hall International},
  title = {Systematic Software Development using VDM},
  url = {http://homepages.cs.ncl.ac.uk/cliff.jones/ftp-stuff/Jones1990.pdf},
  year = {1990}
}

@article{OwickiGries76,
  author = {S.~S. Owicki and D. Gries},
  journal = acta,
  issn={0001-5903},
  pages = {319--340},
  title = {An axiomatic proof technique for parallel programs {I}},
  volume = {6},
  number={4},
  doi={10.1007/BF00268134},
  ourl={http://dx.doi.org/10.1007/BF00268134},
  publisher={Springer-Verlag},
  language={English},
  year = {1976}
}

@incollection{marriage-RGSep,
year={2007},
booktitle={CONCUR 2007 -- Concurrency Theory},
volume={4703},
series={Lecture Notes in Computer Science},
editor={Caires, Lu{\'i}s and Vasconcelos, Vascot},
title={A Marriage of Rely/Guarantee and Separation Logic},
url={http://dx.doi.org/10.1007/978-3-540-74407-8_18},
publisher={Springer},
author={Vafeiadis, Viktor and Parkinson, Matthew},
pages={256-271}
}

@article{JonesYatapanage19,
	Author = {Jones, Cliff B. and Yatapanage, Nisansala},
  title = "Investigating the limits of rely/guarantee
relations based on a concurrent garbage
collector example",
  journal = "Formal Aspects of Computing",
  volume = {31},
  Pages = {353--374},
  year = "2019",
}

@article{PVEaCfC,
  title = "Possible values: Exploring a concept for concurrency",
  journal = "Journal of Logical and Algebraic Methods in Programming",
  volume = "85",
  number = "5, Part 2",
  pages = "972--984",
  month = "August",
  onote = "Accepted 8 January 2016. Available online 13 January 2016",
  xnote = "Articles dedicated to Prof. J. N. Oliveira on the occasion of his 60th birthday",
  issn = "2352-2208",
  doi = "http://dx.doi.org/10.1016/j.jlamp.2016.01.002",
  url = "http://www.sciencedirect.com/science/article/pii/S2352220816000031",
  author = "Cliff B. Jones and Ian J. Hayes",
  year = "2016"
}

@Article{BenAri-84, 
  author = 	 {Mordechai Ben-Ari},
  title = 	 {Algorithms for on-the-fly garbage collection},
  journal = 	 {ACM Transactions on Programming Languages and Systems},
  year = 	 {1984},
  volume = 	 {6},
  number = 	 {3},
  pages = 	 {333--344}
}

@inproceedings{NietoEsparza00,
  author = {Leonor Prensa Nieto and Javier Esparza},
  title     = {Verifying Single and Multi-mutator Garbage Collectors with {O}wicki-{G}ries in {I}sabelle/{HOL}},
  booktitle = {MFCS 2000},
  series    = {LNCS},
  volume    = {1893},
  pages     = {619--628},
  publisher = {Springer},
  year      = {2000}
}

@inproceedings{Zakowski2017,
  AUTHOR = {Yannick Zakowski and
               David Cachera and
               Delphine Demange and
               Gustavo Petri and
               David Pichardie and
               Suresh Jagannathan and
               Jan Vitek},
  TITLE = {Verifying a Concurrent Garbage Collector Using a Rely-Guarantee Methodology},
  BOOKTITLE = {Interactive Theorem Proving - 8th International Conference, {ITP} 2017, Bras{\'{\i}}lia, Brazil, September 26-29, 2017, Proceedings},
  YEAR = {2017},
  EDITOR = {Mauricio Ayala{-}Rinc{\'{o}}n and
               C{\'{e}}sar A. Mu{\~{n}}oz},
  VOLUME = {10499},
  SERIES = {Lecture Notes in Computer Science},
  PAGES = {496--513},
  PUBLISHER = {Springer-Verlag}
}

@inproceedings{havelund1999mechanical,
  title={Mechanical verification of a garbage collector},
  author={Havelund, Klaus},
  booktitle={International Parallel Processing Symposium},
  pages={1258--1283},
  year={1999},
  organization={Springer}
}

@incollection{Yatapanage25,
  author       = {Nisansala P. Yatapanage},
  editor       = {Ana Cavalcanti and
                  James Baxter},
  title        = {Exploring the Boundaries of Rely/Guarantee and Links to Linearisability},
  booktitle    = {The Practice of Formal Methods: Essays in Honour of Cliff Jones, Part
                  {II}},
  series       = {Lecture Notes in Computer Science},
  volume       = {14781},
  pages        = {247--267},
  publisher    = {Springer},
  year         = {2024},
  url          = {https://doi.org/10.1007/978-3-031-66673-5\_13},
  doi          = {10.1007/978-3-031-66673-5\_13},
  bibsource    = {dblp computer science bibliography, https://dblp.org}
}

\newpage

\section{Appendix}
This Appendix contains the proof of Lemma 1. The other lemmas are simple to prove and less interesting to discuss in detail.

Proof of Lemma \ref{canbemarkedRule}: \\
Assume $all-ok(s,\sigma)$ and $post-change(\sigma,\sigma')$. \\

From $post-Change(\sigma,\sigma')$, let $(x,k) = \uniqueval{(x,k)} {hp(x,k) \neq hp'(x,k)}$ (the change which occurred). There are two cases from $post-change(\sigma,\sigma')$ :\\

Case 1: $hp'(x,k) = \nil \And tbm' = \nil$: From the assumption of $all-ok(s,\sigma)$, there were no marked nodes in $s$ with unmarked children. Therefore, removing a link will not change this and so, $all-ok(s,\sigma')$ holds, which implies $can-be-marked(s,\sigma')$. \\

Case 2: $tbm' = hp'(x,k) \And hp'(x,k) \in reach(roots,hp)$: \\
$\exists{(c,j) \in dom(hp)}{hp(c,j) = hp'(x,k)}$  (There had to be another node $c$ pointing to the new child of $a$), from $hp'(x,k) \in reach(roots,hp)$ in $post-change(\sigma,\sigma')$. \\

Either $c \notin reach(roots,hp')$ or $c \in reach(roots,hp')$

Case 2.1: $c \notin reach(roots,hp')$:  \\
As only one link was changed, this must mean that $c$ was only reachable via the original $hp(x,k)$, so the link is broken: $c \in reach(\set{hp(x,k)},hp)$ \\

Case 2.1.1: $x \in s \And x \in marked$: \\
Therefore, $hp(x,k) \in marked$, from $all-ok(s,\sigma)$. \\
Let $q$ be the address such that $q \in (reach(\set{hp(x,k},\sigma) \union \set{hp(x,k)}))$ and \\
$q \in marked \And \exists{r \in dom(hp(q)}{hp(q,r) \notin marked \And 
c \in (reach(\set{r},\sigma) \union \set{r}}) $ \\
($q$ is the first marked node with an unmarked child that is an ancestor of $c$). \\
Therefore, $c$ satisfies the conditions of $can-be-marked$, to provide an alternative path to $hp'(x,k)$. \\
From $all-ok(s,\sigma)$, there is no node $y \in (marked \inter s)$ such that $\exists{j \in hp(y)}{hp(y,j) \notin marked}$. Since $marked' \geq marked$, $\forall{z \in (marked' \inter s)}{\forall{h \in hp'(z)}{hp'(z,h) \in marked}}$. \\
Therefore, $can-be-marked(s,\sigma)$ holds. \\

Case 2.1.2: $x \notin s \Or x \notin marked$: \\
From $all-ok(s,\sigma)$ and the fact that the only change was $hp(x,k)$ redirecting to $hp'(x,k)$, $can-be-marked$ holds vacuously. \\

Case 2.2: $c \in reach(roots,hp')$:  \\
Since $roots \subseteq marked$ in all states, this gives that $c \in reach(marked,hp')$. \\

Either $c \in reach(s,hp')$ or $c \notin reach(s,hp')$. \\

Case 2.2.1: $c \in reach(s,hp')$: \\

Case 2.2.1.1: $c \in (s \union marked)$: \\
Then, from $all-ok(s,\sigma)$, $hp(x,k) \in marked$. \\
From $all-ok(s,\sigma)$ and the fact that the only change was $hp(x,k)$ redirecting to $hp'(x,k)$, $can-be-marked$ holds vacuously. \\

Case 2.2.1.2: $(c \notin s) \Or (c \notin marked)$: \\
Therefore, $c$ satisfies the conditions of $can-be-marked$, to provide an alternative path to $hp'(x,k)$. \\

Showing that $tbm-put$ holds is straight-forward using $all-ok(s,\sigma)$ and $post-Change(\sigma,\sigma')$, as $post-Change$ sets $tbm$ to the new destination node, and from 
$all-ok(s,\sigma)$, there are no other nodes in $s$ and $marked$ with unmarked children.

%\input{S-conf-1}

%%%%%%%%%%%%%%%%%%%%%%%%%%%%%%%%%%%%%%%%%%%%%%%%%%%%%%
\end{document}